\newcommand{\mbb}{\mathbb}
\newcommand{\mbf}{\mathbf}
\newcommand{\mc}{\mathcal}
\newcommand{\termfont}[1]{{\tt #1}}
\newcommand{\ceil}[1]{\lceil {#1} \rceil}
\newcommand{\state}{x}
\newcommand{\goalstate}{\hat \state}
\newcommand{\control}{u}
\newcommand{\bcontrol}{\mbf{\control}}
\newcommand{\dyn}{f}
\newcommand{\fdyn}{\dyn^+}
\newcommand{\bdyn}{\dyn^-}
\newcommand{\cdyn}{F}
\newcommand{\runningcost}{g}
\newcommand{\algname}{GrAVITree}
\newcommand{\xdim}{n}
\newcommand{\udim}{m}
\newcommand{\xconstraint}{\mc{X}}
\newcommand{\uconstraint}{\mc{U}}
\newcommand{\xset}{\mbb{R}^\xdim}
\newcommand{\uset}{\mbb{R}^{\udim}}
\newcommand{\valuefn}{J}
\newtheorem{remark}{Remark}
\newtheorem{proposition}{Proposition}
\newcommand{\figref}[1]{Fig.~\ref{#1}}
\newcommand{\secref}[1]{Sec.~\ref{#1}}
\title{\LARGE \bf GrAVITree: Graph-based Approximate Value Function In a Tree}
\author{
Patrick H. Washington, David Fridovich-Keil, and Mac Schwager
\thanks{
P. Washington and M. Schwager are with the Department of Aeronautics \& Astronautics, Stanford University. D. Fridovich-Keil is with the Department of Aerospace Engineering, UT Austin. Correspondence to \href{mailto:phw@stanford.edu}{\termfont{phw@stanford.edu}}.}%
\thanks{Toyota Research Institute provided funds to support this work. The NASA University Leadership initiative (grant \#80NSSC20M0163) provided funds to assist the authors with their research, but this article solely reflects the opinions and conclusions of its authors and not any NASA entity. The first author was supported on a National Defense Science and Engineering Graduate (NDSEG) Fellowship. We are grateful for this support.}}
\begin{document}

\maketitle
\thispagestyle{empty}
\pagestyle{empty}

\begin{abstract}
In this paper, we introduce \algname, a tree- and sampling-based algorithm to compute a near-optimal value function and corresponding feedback policy for indefinite time-horizon, terminal state-constrained nonlinear optimal control problems. 
Our algorithm is suitable for arbitrary nonlinear control systems with both state and input constraints. 
The algorithm works by sampling feasible control inputs and branching backwards in time from the terminal state to build the tree, thereby associating each vertex in the tree with a feasible control sequence to reach the terminal state. 
Additionally, we embed this stochastic tree within a larger graph structure, rewiring of which enables rapid adaptation to changes in problem structure due to, e.g., newly detected obstacles. 
Because our method reasons about global problem structure without relying on (potentially imprecise) derivative information, it is particularly well suited to controlling a system based on an imperfect deep neural network model of its dynamics. 
We demonstrate this capability in the context of an inverted pendulum, where we use a learned model of the pendulum with actuator limits and achieve robust stabilization in settings where competing graph-based and derivative-based techniques fail.
\end{abstract}

\section{Introduction}
\label{sec:intro}

It is generally difficult to find global solutions to nonlinear optimal control problems, and it can be particularly difficult to do so in the presence of actuation and state constraints. 
Most existing approaches rely on either approximate solutions to the Hamilton-Jacobi-Bellman (HJB) equation or settle for locally-convergent model predictive control (MPC) techniques.
HJB and path integral methods provide an implicit state feedback policy; however, they are computationally prohibitive in the general case. 
Indeed, HJB methods are often only effective when they can be computed offline, and are hence not well suited to settings in which problem structure changes at run-time, e.g. due to a new obstacle.
Conversely, nonlinear MPC methods can react online to such changes, but do not offer a feedback policy, and can suffer from impractically long optimization times for nonlinear dynamics, non-convex constraints and high state dimensions.  
Furthermore, nonlinear MPC cannot give optimality guarantees for solution trajectories in such non-convex cases. Most importantly, solvers for nonlinear MPC use Jacobians, and sometimes Hessians, of the dynamics model \cite{pfrommer2022tasil}, which can be quite unreliable in deep learned dynamics models, leading to poor control performance.

We propose a sampling-based solution technique, called \textbf{Gr}aph-based
\textbf{A}pproximate \textbf{V}alue Function \textbf{I}n a \textbf{Tree} (\algname), to approximately solve the HJB equation for a class of constrained nonlinear optimal control problems. Our method avoids finite element grids common to other HJB solution methods, and, unlike other HJB methods, maintains a graph of multiple possible paths to a goal state to enable fast online ``rewiring'' in response to changing constraint structure. Furthermore, our method does not rely on gradients of the dynamics model, making it suitable for controlling systems whose dynamics are modeled by a deep neural network.

As illustrated in \figref{fig:front}, we construct a tree incrementally \emph{backward in time} from a desired terminal state and rewire the tree such that trajectories through the tree are optimal given the discrete set of state samples in the graph. As the samples fill the dynamically-feasible submanifold of the state space, we observe that optimal trajectories through the tree approach globally optimal trajectories in the continuous state space. Critically, at runtime, we also provide a feedback control law that uses this tree and the current system state to interpolate an input at \emph{any} state, even one which was not sampled in the tree. 

\begin{figure}[t]
    \centering
    \includegraphics[width=0.33\textwidth]{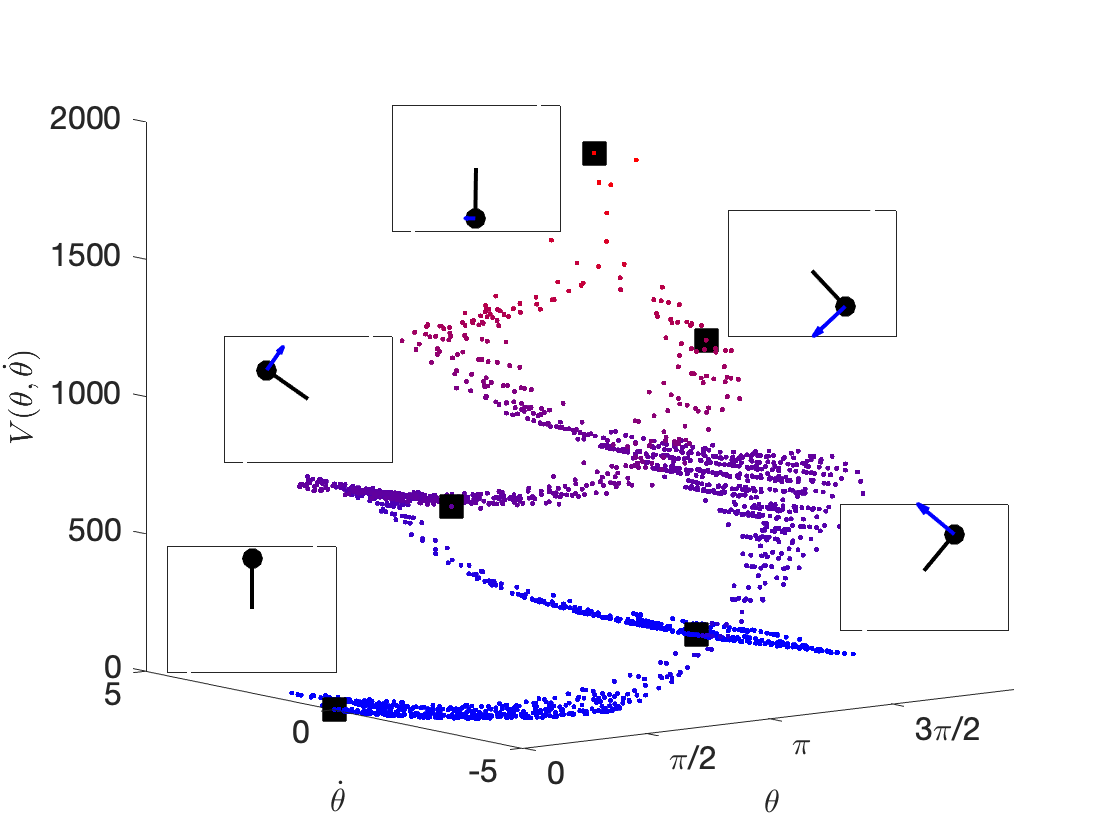}
    \caption{An example of a tree corresponding to pendulum swingup. The tree is constructed backward-in-time.
    Insets show position and speed at illustrative states.}
    \label{fig:front}
\end{figure}

\section{Problem Formulation}
\label{sec:problem}

\subsection{Optimal control problem}
\label{subsec:optimal_control_problem}

Consider the following optimal control problem which is terminal state constrained and has unspecified final time:
\begin{align}
    \label{eqn:terminal_constraint_problem}
    \valuefn(\state_0; \hat x) = \min_{\bcontrol, t' \ge 0}~&\sum_{t = 0}^{t'} \runningcost(\state_t, \control_t) \\
    \label{eqn:input_constraint}
    \textnormal{subject to}~&\control_t \in \uconstraint \subseteq \uset\\
    \label{eqn:dynamics_constraint}
    &\fdyn(\state_t, \control_t) = \state_{t+1} \in \xconstraint \subseteq \xset\\
    \label{eqn:terminal_constraint}
    &\state_{t'} = \hat x.
\end{align}
Here, the system dynamics are given in \eqref{eqn:dynamics_constraint} for state $\state$ and input $\control$. The desired terminal state is $\goalstate$, which must occur at some future time step $t'$ (without loss of generality, we presume that time starts at $t = 0$). Further, we presume that state and input are uniformly bounded within compact sets $\xconstraint$ and $\uconstraint$, respectively.

In \eqref{eqn:terminal_constraint_problem}, the decision variables are $\bcontrol := (\control_0, \dots, \control_{t'})$ and the final time $t' \ge 0$ at which the terminal constraint \eqref{eqn:terminal_constraint} must be satisfied. The objective is then the accumulated running cost $\runningcost$.
Certain problems of this form are well-studied. For example, when $\runningcost(\state, \control) \equiv 0$, \eqref{eqn:terminal_constraint_problem} reduces to a standard reachability problem, for which level-set methods \cite{mitchell2005time, bansal2017hamilton} are available for low-dimensional systems (typically, $\xdim \le 5$). Other methods exist as well, such as ellipsoidal \cite{kurzhanski00ellispoidal, kurzhanski02ellipsoidal} and zonotope \cite{althoff2011zonotope} approximations. However, as we are concerned with the general case in which $\runningcost \not \equiv 0$ and for larger state space dimensions $\xdim$, our approach does not explicitly attempt to either (a) solve a fixed-horizon HJB equation, or (b) build a geometric approximation to any reachable set of interest.

These (and other, still more general) types of problems are commonly solved to local optimality using a variety of nonlinear programming techniques, e.g., \cite{chen2017constrained, nocedal2006numerical}. Unlike these methods, we focus on approximating a global optimum and an associated state-feedback strategy (which is typically beyond the scope of nonlinear programming and MPC-based approaches).

\subsection{Backward dynamics}
\label{subsec:backward_dynamics}

So far, we have presumed that system dynamics are represented in discrete time by \eqref{eqn:dynamics_constraint}. Later, we shall make use of the \emph{backward} form of the dynamics. Before defining the backward dynamics, we first note that if continuous-time dynamics are $\dot x_\tau = \cdyn(x_\tau, u_\tau)$, the discrete time (forward) dynamics $\fdyn$ from \eqref{eqn:dynamics_constraint} can be written as
\begin{align} 
    x_{t+1} = \fdyn(\state_t, \control_t) &= \state_t + \int_{\tau(t)}^{\tau(t+1)} \cdyn\big(\state(\tau), \control(\tau)\big) d\tau
    \label{eqn:forward_dynamics}
\end{align}
where $\tau$ represents the continuous-time value, $\tau(t)$ is the time associated with time step $t$,
and state/input signals are indicated by $\state(\cdot)$ and $\control(\cdot)$, respectively. The form of \eqref{eqn:forward_dynamics} suggests the following equation for backward dynamics $\bdyn$:
\begin{equation} 
    \label{eqn:backward_dynamics}
    x_{t-1} = \bdyn(\state_t, \control_{t-1}) = \state_t + \int_{\tau(t)}^{\tau(t-1)} \cdyn\big(\state(\tau), \control(\tau)\big) d\tau .
\end{equation}

Practically, we will approximate the integrals in \eqref{eqn:forward_dynamics} and \eqref{eqn:backward_dynamics} via numerical quadrature, e.g.,
\begin{equation}
    \label{eqn:approx_integral}
    \int_{\tau_1}^{\tau_2} \cdyn(\state_\tau,\control_\tau) \approx \sum_{k=0}^{\ceil{(\tau_2-\tau_1)/\delta\tau}} F(\state_{\tau_1+k\delta\tau},\control_{\tau_1+k\delta\tau})\delta\tau
\end{equation}
where $\delta\tau \ll 1$ is a small time interval.

Later, in \secref{sec:backward_tree}, we will use the backward form of dynamics \eqref{eqn:backward_dynamics} extensively to construct a tree whose branches encode trajectories which terminate in the desired state $\goalstate$.

\section{Related Work}
\label{sec:related}

In principle, our work is closely related to classical dynamic programming approaches from the optimal control literature. Algorithmically, however, it draws significant inspiration from related work in sample-based kinodynamic motion planning. Both areas are extremely well-studied; what follows is an succinct summary of existing techniques and a discussion of how they relate to the present work.

\subsection{Dynamic programming}
\label{subsec:dynamic_programming}

Problem \eqref{eqn:terminal_constraint_problem} may be viewed as a \emph{shortest-path problem}, and hence admits a dynamic programming solution to compute the optimal path length, or \emph{value function}, $V(\cdot)$ for every state $\state$. This fact forms the basis for a wide variety of techniques; the interested reader is directed to \cite{bertsekas1995dynamic}. Exact dynamic programming methods scale exponentially with the size of discrete state spaces and require discretization to extend to continuous problems. Approximate methods \cite{bertsekas2012approximate} are increasingly popular, however most methods do not come with formal convergence guarantees and computational complexity can still be a challenge with these methods. While they are still quite successful in a number of applications \cite{lee2004approximate}, two areas of particular note are linear-quadratic problems (and approximations) such as \cite{kalman1960contributions, mayne1966second, li2004iterative} and Monte Carlo Tree Search (MCTS) methods, which have been used extensively to solve Partially Observed Markov Decision Processes (POMDPs) \cite{ye2017despot, sunberg2018online}.

Reinforcement learning \cite{sutton2018reinforcement} approaches form a broad class of approximate dynamic programming methods that has received significant attention in recent years. ``Deep'' variants of these methods refine a deep learned representation of the value function (or implicitly represent it with a \emph{policy} or feedback law). Experimental results in this field are extremely encouraging, e.g. in the domains of arcade games \cite{mnih2013playing}, Go \cite{silver2016mastering}, robotic arm control \cite{gu2017deep}, and dexterous manipulation \cite{andrychowicz2020learning}. However, despite these experimental successes and several initial attempts \cite{achiam2017constrained, liu2019ipo, li2021augmented}, it is still uncommon for reinforcement learning methods to consider non-dynamic constraints of the form \eqref{eqn:input_constraint} and \eqref{eqn:terminal_constraint}. Our method accounts for these types of constraints by construction. %

\subsection{Sample-based kinodynamic planning}
\label{subsec:sample_based_kinodynamic_planning}

Sample-based motion planning has also been extraordinarily successful in recent decades. Early approaches such as the well-known rapidly-exploring random tree (RRT) \cite{lavalle1998rapidly} and probabilistic roadmap (PRM) \cite{kavraki1996probabilistic} have since been generalized to work in kinodynamic settings \cite{csucan2009kinodynamic, karaman2013sampling} and guarantee asymptotic optimality \cite{karaman2011sampling, gammell2015batch}. In these algorithms, states are sampled at random from the feasible set $\xconstraint$ and added to a graph structure which encodes the proximity of sampled states. Most methods construct \emph{tree} structures from the current state, the desired terminal state, or both \cite{kuffner2000rrt}. Sampling methods can (a) scale to CPU-constrained \cite{ichnowski2019motion} and high-dimensional systems \cite{luna2020scalable, janson2015fast} and (b) execute in real-time for modest problem instances \cite{allen2016real}. Additionally, sampling-based planning can be coupled with nonlinear stability analysis to provide runtime stability guarantees \cite{tedrake2010lqr}. Although the present work also defines a feedback control law, it does not rely upon solving sum-of-squares programs (and is hence more scalable, but does not guarantee stability). 

Our own work is inspired by the RRT* algorithm of \cite{karaman2011sampling}; in particular, the \termfont{UpdateTree} method (\secref{subsec:update_tree}) is a direct extension of the \termfont{rewire} procedure from RRT* to our setting. Further, RRT* presumes the existence of a \termfont{steer} subroutine to drive a dynamical system to a desired state. In general, this is a difficult problem, and in the same spirit as \cite{lamiraux2004kinodynamic} we contribute an efficient, general method for approximate steering in \secref{subsec:find_connections}. Unlike RRT*, however, we build our tree \emph{backward} from the terminal state since we presume that only it is fixed and the initial state will be determined at runtime. Additionally, we also provide a feedback law that drives arbitrary states, not just those in the tree, to the goal state. Hence, our method is robust to some degree of model mismatch, where the dynamics of the system $\cdyn$ differ from those of the physical system.
\section{\algname}
\label{sec:backward_tree}

We solve problem \eqref{eqn:terminal_constraint_problem} by constructing a graph $\mathcal{G} = (\mathcal{V}_G,\mathcal{E}_G)$ with vertices $\mathcal{V}_G$ and edges $\mathcal{E}_G$, where vertices contain state information and edges contain the control and stage cost information. The graph is directed and may have cycles. 

For some vertex $v_G\in\mathcal{V}_G$, denote the state with $v_G[\state]$.
For each edge $e_G\in\mathcal{E}_G$, denote the control with $e_G[\control]$ and the stage cost with $e_G[c]$.

More precisely, if there is an edge $e_{G}^{ij}$ that connects vertex $v_G^i$ to vertex $v_G^j$, then
\begin{equation}
    v_G^j[\state] = \fdyn(v_G^i[\state],e_G^{ij}[\control]).
\end{equation}
We will construct the tree backward in time; hence,
\begin{equation}
    v_G^i[\state] = \bdyn(v_G^j[\state],e_G^{ij}[\control]).
\end{equation}
Moreover, the stage cost of an edge is
\begin{equation}
    e_G^{ij}[c] = \runningcost(v_G^i[\state],e_G^{ij}[\control]).
\end{equation}

While the graph $\mathcal{G}$ is constructed, we also build a tree $\mathcal{T}=(\mathcal{V}_T, \mathcal{E}_T)$. 
Unlike $\mathcal{G}$, the tree's vertices contain both state and cost-to-go information, and its edges contain controls and the cost of executing those controls from the corresponding states.
Importantly, where $\mathcal{G}$ contains edges from each vertex to all vertices determined to be reachable in one step, in $\mathcal{T}$ each vertex has a single edge corresponding to the best control action to take from that state, according to the current cost-to-go information.

For each vertex $v_T\in\mathcal{V}_T$, denote the state with $v_T[\state]$ and the cost-to-go with $v_T[J]$. For each edge $e_T\in\mathcal{E}_T$, denote the control with $e_T[\control]$ and the stage cost with $e_T[c]$.
As in graph $\mathcal{G}$, tree vertices satisfy
\begin{equation}
    v_T^j[\state] = \fdyn(v_T^i[\state],e_T^{ij}[\control]),~
    v_T^i[\state] = \bdyn(v_T^j[\state],e_T^{ij}[\control])
\end{equation}
and the stage costs encoded in the tree are identical to the those encoded in the graph:
\begin{equation}
    e_T^{ij}[c] = \runningcost(v_T^i[\state],e_T^{ij}[\control]).
\end{equation}

From the time-additive form of \eqref{eqn:terminal_constraint_problem}, the cost-to-go of a vertex is determined by adding the stage cost to the cost-to-go of the parent vertex:
\begin{equation}
    v_T^i[J] = e_T^{ij}[c] + v_T^j[J].
\end{equation}
The root of the tree has $v_T^\text{root}[\state]\equiv\goalstate$ and $v_T^\text{root}[J] \equiv 0$.

Tree vertices are in one-to-one correspondence with graph vertices. Meanwhile, every edge in $\mathcal{T}$ has a corresponding edge in $\mathcal{G}$, but not every edge in $\mathcal{G}$ has a corresponding edge in $\mathcal{T}$. 
Consequently, it is convenient to represent both $\mathcal{G}$ and $\mathcal{T}$ with a single data structure; however, they serve different purposes and are best thought of separately. %
Additionally, note that despite building backwards, the direction of the edges in both the tree and the graph is always defined such that applying control $e^{ij}[\control]$ from state $v^i[\state]$ yields state $v^j[\state]$.

Construction proceeds according to Alg.~\ref{alg:build}. First, a random state is sampled from the state space and the closest vertex in the graph is identified. Second, several controls are randomly sampled and applied backward in time from the chosen vertex. The control that leads to the state that is furthest from any existing vertex in the graph is chosen. We do this to prioritize even coverage of the state space but other control selection techniques, such as targeting the sampled state to prioritize certain regions of the space, are possible as well. Third, a vertex and an edge that correspond to that state and control are added to the graph and the tree. Fourth, edges from the new vertex to other existing vertices are identified. This procedure is detailed in Sec. \ref{subsec:find_connections}. Finally, the tree is updated using the added vertex and new edges in the graph, detailed in Sec. \ref{subsec:update_tree}.

\begin{algorithm}
	\caption{Build($\goalstate$,$\fdyn$,$\bdyn$,$\runningcost$) \label{alg:build}}
	    Initialize graph $\mathcal{G}$ with vertex $(\goalstate)$\\
		Initialize tree $\mathcal{T}$ with vertex $(\goalstate,0)$\\
		\While{not finished}{
			Sample state $\state_s\in\mathcal{X}$ \\
			Pick vertices $v_G,v_T$ closest to $\state_s$ \\
			Choose control $\control$ to apply from $v_G, v_T$ \\
			$\state\gets\bdyn(v_G[x],\control)$ \\
			$v_G' \gets (\state)$ \Comment{graph vertex} \\
			$e_G^{v_G' v_G}\gets(\control)$ \Comment{graph edge} \\
			Add $v_G',e_G'$ to $\mathcal{G}$ \\
			$c \gets \runningcost(\state,\control)$ \Comment{stage cost} \\
			$v_T' \gets (\state,c+v_T[J]$) \Comment{tree vertex}\\
			$e_T' \gets (\control,c)$ \Comment{tree edge} \\
			Add $v_T',e_T'$ to $\mathcal{T}$ \\
			FindConnections($v_G',\mathcal{G},\mathcal{T}$) \\
			UpdateTree($v_G',v_T',\mathcal{G},\mathcal{T}$)
		}
		\Return $\mathcal{G},\mathcal{T}$
\end{algorithm}

\subsection{Find Connections}
\label{subsec:find_connections}

This portion of the algorithm aims to identify potential parent and child vertices for a given vertex in the graph. 
Allowing a vertex to select a new parent provides an opportunity for it to improve its cost-to-go. %
Finding new potential children allows other vertices to choose it as a parent and similarly reduce their own cost-to-go.

First, controls are sampled one step forward in time from the new vertex to identify other potential parents. This works by simulating the controls from that vertex, identifying the vertices closest to the resulting states, and attempting to refine the controls so that the two coincide. 
To avoid issues related to the imprecision of gradients of (potentially neural network) dynamics, we employ a derivative-free optimization technique based upon the pattern search of \cite{hookejeeves}, in which we use several random directions instead of only the coordinate axes. For each search that reaches some threshold of the targeted vertex, an edge is added to the graph. 
Second, this process is repeated for controls going backward in time to identify potential children.

\begin{algorithm}
	\caption{FindConnections($v_G,\mathcal{G},\mathcal{T}$) \label{alg:find_connections}}
	    \For{$v_G' \in $ \text{Candidate Parents} $\subset \mathcal{V}_G$, $v_G'\neq v_G$}{
	        \If{$\exists \control \text{ s.t. } \|\fdyn(v_G[\state],\control)-v_G'[\state]\|<\varepsilon$} {
	            $e_G^{v_G v_G'} \gets (\control,\runningcost(v_G[\state],\control))$
	        }
	    }
	    \For{$v_G' \in \text{Candidate Children} \subset \mathcal{V}_G,\; v_G'\neq v_G$}{
	        \If{$\exists \control \text{ s.t. } \|\bdyn(v_G[\state],\control)-v_G'[\state]\|<\varepsilon$} {
	            $e_G^{v_G' v_G} \gets (\control,\runningcost(v_G'[\state],\control))$
	        }
	    }
\end{algorithm}

\begin{algorithm}[h]
    \caption{UpdateTree($v_G,v_T,\mathcal{G},\mathcal{T}$) \label{alg:update_tree}}
		\For{$e_G^{v_G i} \in \mathcal{E}_G$}{
		    $v_G' \gets$ graph vertex corresponding to $i$ \\
		    $v_T' \gets$ tree vertex corresponding to $v_G'$ \\
		    \If{$e_G^{v_G v_G'}[c]+v_T'[J] < v_T[J]$}{
		        Remove current edge starting at $v_T$ from $\mathcal{T}$ \\
		        $e_T^{v_T v_T'} \gets (e_G^{v_G v_G'}[\control],e_G^{v_G v_G'}[c])$ \\
		        Add $e_T^{v_T v_T'}$ to $\mathcal{T}$ \\
		        $v_T[J] \gets e_T^{v_T v_T'} + v_T'[J]$
		    }
		}
	    \For{$e_G^{i v_G} \in \mathcal{E}_G$}{
		    $v_G' \gets$ graph vertex corresponding to $i$ \\
		    $v_T' \gets$ tree vertex corresponding to $v_G'$ \\
		    \If{$e_G^{v_T' v_T}[c] + v_T[j] < v_T'[J]$}{
		        UpdateTree($v_G',v_T',\mathcal{G},\mathcal{T}$)
		    }
	    }
\end{algorithm}

\subsection{Update Tree}
\label{subsec:update_tree}

Once a vertex has determined its potential parents and children from the set of current vertices in the graph, it must update its tree information. First, it looks at its potential parents and decides whether any result in lower cost-to-go. Second, the vertex checks on its potential children and informs them of its cost-to-go. If a potential child can improve its cost-to-go by descending from the current vertex, then the \termfont{UpdateTree} routine is run on that child. 
It is important to note that since vertices may have many potential parents, the \termfont{UpdateTree} routine may be run on the same vertex multiple times. However, it will only continue with the recursion along that path through the graph if there is improvement. This prevents cycles through the graph since a path that goes through the same vertex twice cannot be an improvement, preventing infinite recursion.

\subsection{Termination Check}
\label{subsec:termination_check}

As with most graph-based planning methods \cite{lavalle1998rapidly,karaman2011sampling}, our method is \emph{anytime} and there exist a number of reasonable stopping criteria. Options include spending a fixed maximum wall-clock time, waiting until the tree contains at least a certain number or density of vertices, and halting when the tree includes a vertex near a particular location of interest. 
The corresponding planning algorithm does not fundamentally change if the stopping condition changes.

\subsection{Cost-to-Go Properties}

In this section we state and prove two fundamental properties related to the cost-to-go represented in \algname. 
In future work, we intend to extend these results to establish convergence guarantees for the value function across the continuous state space.
Let $J^*(x)$ be the optimal cost-to-go from state $x$ and $v_T[J]^k$ be the cost-to-go at iteration $k$.

\begin{proposition}[Conservative Cost-to-Go]
\label{Prop:CostToGo}
The cost-to-go at each vertex in the tree is a conservative estimate for the optimal cost-to-go, i.e., $v_T[J]^k \geq J^*(v_T[x]), \; \forall v_T\in\mathcal{T}, \; \forall k \ge 0 $.
\end{proposition}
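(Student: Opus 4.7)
The plan is to prove the proposition by establishing a stronger invariant: for every iteration $k$ and every vertex $v_T \in \mathcal{V}_T$, the sequence of tree edges leading from $v_T$ to the root $v_T^{\text{root}}$ encodes a dynamically feasible trajectory from $v_T[\state]$ to $\goalstate$ whose total running cost equals exactly $v_T[J]^k$. Once this invariant is in place, the proposition follows immediately: because the encoded trajectory satisfies \eqref{eqn:input_constraint}--\eqref{eqn:terminal_constraint} with initial state $v_T[\state]$ and realized total cost $v_T[J]^k$, and $J^*(v_T[\state])$ is by definition the infimum of the total cost over \emph{all} such feasible trajectories, we must have $v_T[J]^k \geq J^*(v_T[\state])$.

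To prove the invariant, I would induct on $k$. The base case is the initial tree containing only the root, where $v_T^{\text{root}}[\state] = \goalstate$ and $v_T^{\text{root}}[J] = 0$; the zero-length trajectory ($t' = 0$) witnesses feasibility and matches the stored cost. For the inductive step, only two operations modify $\mathcal{T}$. First, \termfont{Build} creates a new vertex $v_T'$ with state $\bdyn(v_T[\state], \control)$ for some sampled $\control \in \uconstraint$, attaches a single tree edge $(\control, \runningcost(v_T'[\state], \control))$ to the existing vertex $v_T$, and sets $v_T'[J] = \runningcost(v_T'[\state], \control) + v_T[J]$; prepending $\control$ to the feasible trajectory at $v_T$ (guaranteed by the inductive hypothesis) yields a feasible trajectory from $v_T'[\state]$ to $\goalstate$ whose cumulative cost is $v_T'[J]$. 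Second, \termfont{UpdateTree} possibly rewires a vertex's outgoing tree edge to a graph-adjacent vertex $v_T'$ and resets $v_T[J] \leftarrow e_G^{v_G v_G'}[c] + v_T'[J]$; the same concatenation argument applies, using the graph edge's control as the first step and the already-invariant trajectory from $v_T'$ as the remainder.

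The main obstacle will be handling the recursion in \termfont{UpdateTree}, which may touch several vertices within a single outer iteration of \termfont{Build}: the inductive statement must be interpreted as an invariant that is preserved after every individual mutation of $\mathcal{T}$, not merely at outer-loop boundaries. The recursion is safe because each recursive call updates a vertex $v_T'$ using cost information at $v_T$ that was just re-established, and, as noted in \secref{subsec:update_tree}, recursion only continues when an improvement is possible, which precludes infinite revisits of the same vertex. A secondary subtlety concerns the tolerance $\varepsilon$ in Algorithm \ref{alg:find_connections}: if pattern search only drives $\|\fdyn(v_G[\state], \control) - v_G'[\state]\|$ below $\varepsilon$ rather than to zero, the bound holds only up to an $\varepsilon$-accumulation along tree depth, and the proposition is most cleanly stated in the idealized limit $\varepsilon \to 0$ in which the one-step transitions along tree edges are exact.
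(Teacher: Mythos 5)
Your proof is correct and is essentially the paper's argument made rigorous: the paper's one-sentence proof likewise observes that each vertex's stored cost-to-go is the cost of a dynamically feasible path through the graph to the goal, hence an upper bound on the infimum $J^*$, and your induction simply supplies the invariant the paper takes for granted. Your closing caveat about the tolerance $\varepsilon$ in Alg.~\ref{alg:find_connections} also matches the paper, which relegates exactly that point to the remark following the proposition.
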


\begin{proof}
The cost-to-go estimate at a vertex $v_T$ is computed from a path of dynamically feasible states through the graph to the goal. Therefore, the optimal cost-to-go, considering all feasible paths in the continuous space to the goal, must be less than or equal to this estimate.   
\end{proof}

\begin{proposition}[Decreasing Cost-to-Go]
The cost-to-go at every vertex in the tree is monotonically non-increasing as new vertices are added, i.e., $v_T[J]^{k+1} \leq v_T[J]^k,\; \forall v_T\in \mathcal{T}, \forall k\geq0$.
\end{proposition}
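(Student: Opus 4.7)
The plan is to prove the proposition by induction on the iteration index $k$, identifying every location in Algorithms~\ref{alg:build}, \ref{alg:find_connections}, and \ref{alg:update_tree} at which $v_T[J]$ can be modified, and showing that each such modification is guarded by a strict-decrease test. The invariant to maintain is simply that after the $(k+1)$-st iteration concludes, $v_T[J]^{k+1}\le v_T[J]^k$ for every $v_T$ already in $\mathcal{T}$ at iteration $k$. The base case ($k=0$) is immediate because only the root vertex $(\goalstate,0)$ exists and its cost-to-go $0$ is never overwritten, since any candidate replacement of the form $e_G[c]+v_T'[J]$ is nonnegative (assuming $\runningcost\ge 0$, standard for shortest-path formulations) and thus cannot be strictly less than $0$.

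For the inductive step, I would first observe that during iteration $k+1$ the only steps that can affect the cost-to-go of a pre-existing vertex are the calls to \termfont{FindConnections} and \termfont{UpdateTree}; the prior steps only add a fresh vertex $v_T'$ and its outgoing edge, leaving all previously stored $v_T[J]$ values untouched. Moreover, \termfont{FindConnections} operates on $\mathcal{G}$ alone (it only records potential edges and stage costs), so it cannot modify any $v_T[J]$. Hence all potential cost-to-go changes are funneled through \termfont{UpdateTree}.

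Next, I would examine \termfont{UpdateTree} line by line. The cost-to-go update $v_T[J] \gets e_T^{v_T v_T'}[c] + v_T'[J]$ in the first \textbf{for} loop lies strictly inside the guard $e_G^{v_G v_G'}[c]+v_T'[J] < v_T[J]$, so whenever it fires the new value is strictly less than the old one. The accompanying removal and reinsertion of the outgoing tree edge do not affect $v_T[J]$ itself; they merely re-parent $v_T$ while preserving the freshly reduced value. In the second \textbf{for} loop, the algorithm recurses on a potential child $v_T'$ only when $e_G^{v_T' v_T}[c] + v_T[J] < v_T'[J]$, i.e., exactly when the child's cost-to-go would strictly decrease. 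Applying the inductive hypothesis to this recursive call, it in turn only modifies cost-to-go values downward, so the monotone-decrease property propagates through the recursion.

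The main obstacle, and the one point that warrants care, is ensuring that the recursion terminates and never ``re-inflates'' an earlier vertex's cost-to-go via a cycle of updates. I would address this by noting (as the authors already remark in Sec.~\ref{subsec:update_tree}) that every recursive call strictly decreases the cost-to-go of its target vertex, and that the finite set $\mathcal{V}_T$ together with the strict monotonicity rules out revisiting the same vertex with the same or smaller budget. Combining this termination argument with the strict-decrease guards established above shows that across one full iteration every pre-existing $v_T[J]$ either remains unchanged or is overwritten by a strictly smaller value, yielding $v_T[J]^{k+1} \le v_T[J]^k$ and completing the induction.
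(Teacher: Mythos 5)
Your proof is correct and rests on the same core observation as the paper's (much terser) argument: every assignment to $v_T[J]$ in \termfont{UpdateTree} is guarded by a strict-improvement test, and no other step of the algorithm touches stored cost-to-go values. Your version simply makes explicit the induction over iterations and the termination of the recursion, which the paper compresses into two sentences.
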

\begin{proof}
This is because (1) vertex states and graph edge controls do not change and (2) the edges in the tree are only changed when a new edge improves the cost-to-go.
\end{proof}

\begin{remark}
As is typical in sampling based planning, Prop.~\ref{Prop:CostToGo} rests on the assumption that vertices in the graph are exactly connected by dynamically feasible edges.  In implementation, connecting these edges requires a numerical optimization subject to a finite tolerance, specifically $\epsilon$ in Alg.~\ref{alg:find_connections}. With $\epsilon$ sufficiently small, we observe that Prop.~\ref{Prop:CostToGo} holds in practice despite these numerical errors.
\end{remark}

\section{Modifying the Problem at Runtime}
\label{sec:changing}

A key benefit of storing both the graph and the tree is that the tree can be recomputed online with very little effort in response to modifications to the problem parameters. This can include changing the goal state, changing the stage cost function, and/or adding new state and/or control constraints.

To do so, we first update the cost information in the edges of the graph to reflect the new problem structure.
Second, we discard the tree and replace it by finding the shortest path, as measured by number of edges traversed, through the graph to the goal from each vertex, ignoring costs. Third, we repeatedly iterate through each vertex, identifying better parents, as measured by cost-to-go, until no vertex changes in a given cycle. We must iterate in this way and not according to Alg. \ref{alg:update_tree} because here, there is no guarantee that the upstream vertices are optimal until the iterations stop.

Note that reachability will not change if only the cost function changes.
However, changing the goal or adding new constraints can make it so that some vertices no longer have a valid path through the graph to the goal. For example, if there is now a wall partitioning the state space, vertices on one side of the wall cannot reach the other side. Additionally, if an edge starts at a state or uses a control that violates a constraint, the stage cost of that edge is set to infinity. By preserving (and not pruning) vertices and edges which violate the new constraints, we retain information which may be useful if the problem is modified further.

\begin{algorithm}
	\caption{ModifyProblem \label{alg:modify}}
	\For{$e_G^{ij} \in \mathcal{E}_G$}{
	    $e_G[c] \gets \runningcost(v_G^i[\state],e_G^{ij}[\control])$
	}
	\For{$v_G \in \mathcal{V}_G$}{
	    $v_T \gets (v_G[\state],\infty)$ \\
	    $e_T \gets$ the $e_G$ that gives the shortest path 
	}
	changed $\gets$ true \\
	\While{changed}{
	    changed $\gets$ false \\
	    \For{$v_T^i,e_T^{ij}\in\mathcal{T}$}{
	        \If{$e_T^{ij}[c] + v_T^j[J] < v_T^i[J]$}{
	            $v_T^i[J] \gets e_T^{ij}[c] + v_T^j[J]$ \\
	            changed $\gets$ true
	        }
	    }
	}
\end{algorithm}

Because tree reconstruction is relatively efficient, we construct the graph only once for the least-constrained version of the problem. 
After that, constraints may be added, e.g., as information regarding obstacles is gathered from onboard sensors. %

\section{Control}
\label{sec:control}

The tree $\mathcal{T}$ serves as the basis for controlling the system since it encodes the value function for the system and control information. Each vertex contains a point in the state space and the corresponding cost-to-go, while the edges contain control actions that lead to the root of the tree (i.e., the goal). Of course, while controlling the system, the state will never perfectly match the state in any of the vertices, so some interpolation technique is required.

In \secref{sec:results}, we use a bump function, defined by
\begin{equation} \label{eq:bump}
\varphi(\state,\state') = \begin{cases}
\exp\left(-\frac{1}{1 - \left( \gamma \|\state-\state'\| \right)^2} \right), & \|\state-\state'\| < \frac{1}{\gamma} \\
0, & \text{otherwise,}
\end{cases}
\end{equation}
to interpolate the value data in the tree. To generate a control at each time step, we perform a pattern search \cite{hookejeeves} to find the control that minimizes the sum of the stage cost and the value of the subsequent state:
\begin{equation}
    \label{eqn:controller}
    u^* = \arg\min_{u\in\mathcal{U}} \runningcost(\state,\control) + \frac{\sum_{v_T\in\mathcal{T}} \varphi(\fdyn(\state,\control),v_T[\state]) v_T[J]}{\sum_{v_T\in\mathcal{T}} \varphi(\fdyn(\state,\control),v_T[\state])} .
\end{equation}

\begin{algorithm}
	\caption{Control($\state$)}
	$u \gets 0$ \Comment{Vector in $\mathbb{R}^m$} \\
	Initialize $\alpha\in\mathbb{R} > 0$ \\
	Initialize $\delta_1,..., \delta_{2m}$ with unit direction vectors in $\mathbb{R}^m$ \\
	$\state'_0 \gets \fdyn(\state,\control)$ \\
	Estimate $J(\state'_0)$ using interpolation \\
	$J_\text{best} \gets J(\state'_0)$ \\
	\While{$\alpha > \varepsilon$}{
	    $\state'_k\gets\fdyn(\state,\control+\alpha \delta_k)$ for $k\in\{1,...,2m\}$ \\
	    Estimate $J(\state'_k)$ using interpolation for each $k$ \\
	    $k_\text{min} \gets \arg\min_k J(\state'_k)$ \\
	    \eIf{$J(\state'_{k_\text{min}}) < J_\text{best}$}{
	        $J_\text{best} \gets J(\state'_{k_\text{min}})$, $\control \gets \control + \delta_{k_\text{min}}$ \\
	    }{
	        $\alpha \gets \alpha/2$ \\
	    }
	}
	\Return $\control$
\end{algorithm}

\section{Results}
\label{sec:results}

We demonstrate the capability of our approach on two systems: a two-dimensional single integrator and an inverted pendulum. These systems are informative demonstrations because state obstacles are natural to add to the former and control constraints lead to stark changes in the value function of the latter. 

\subsection{Single Integrator and State Constraints}

The single integrator has two states and two control inputs which evolve linearly as
\begin{equation}
    \dot\state = \cdyn(\state,\control) = \control,
\end{equation}
and we presume a stage cost of
\begin{equation}
    \runningcost(\state,\control) = \|\control\|_2 \Delta t.
\end{equation}
The terminal state is assumed to be $\hat x = 0$; hence, the objective is to take the shortest path to the origin. 

The initial graph construction is done in the absence of obstacles. 
The value function for this is shown in \figref{fig:singleintegratorfreevalue}. Without any additional graph building, we add a U-shaped obstacle that forces many paths to be longer. The value function after including the obstacle is shown in \figref{fig:singleintegratorobstaclevalue}.

\begin{figure}
    \centering
    \includegraphics[width=0.45\textwidth]{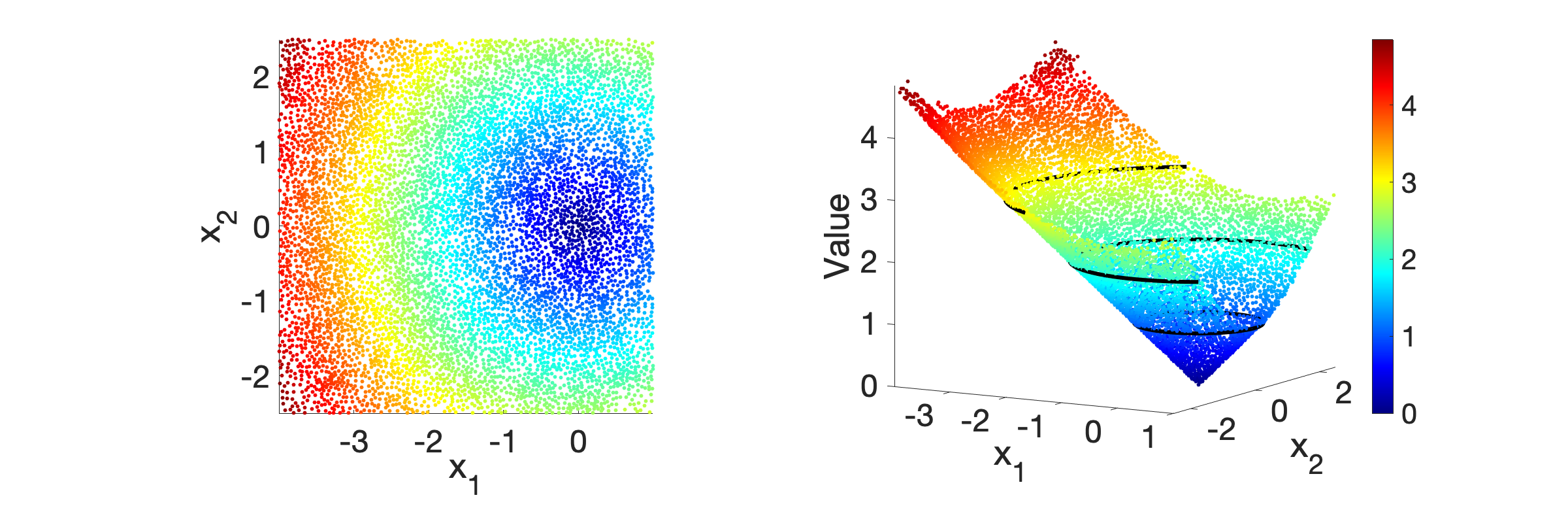}
    \caption{Values in the tree for the unconstrained single integrator system. Black rings that represent level sets for distance to the goal are added to the right plot. The tree's values match the true shortest distance to the origin. This is the result of 1 minute of computation.}
    \label{fig:singleintegratorfreevalue}
\end{figure}

\begin{figure}
    \centering
    \includegraphics[width=0.45\textwidth]{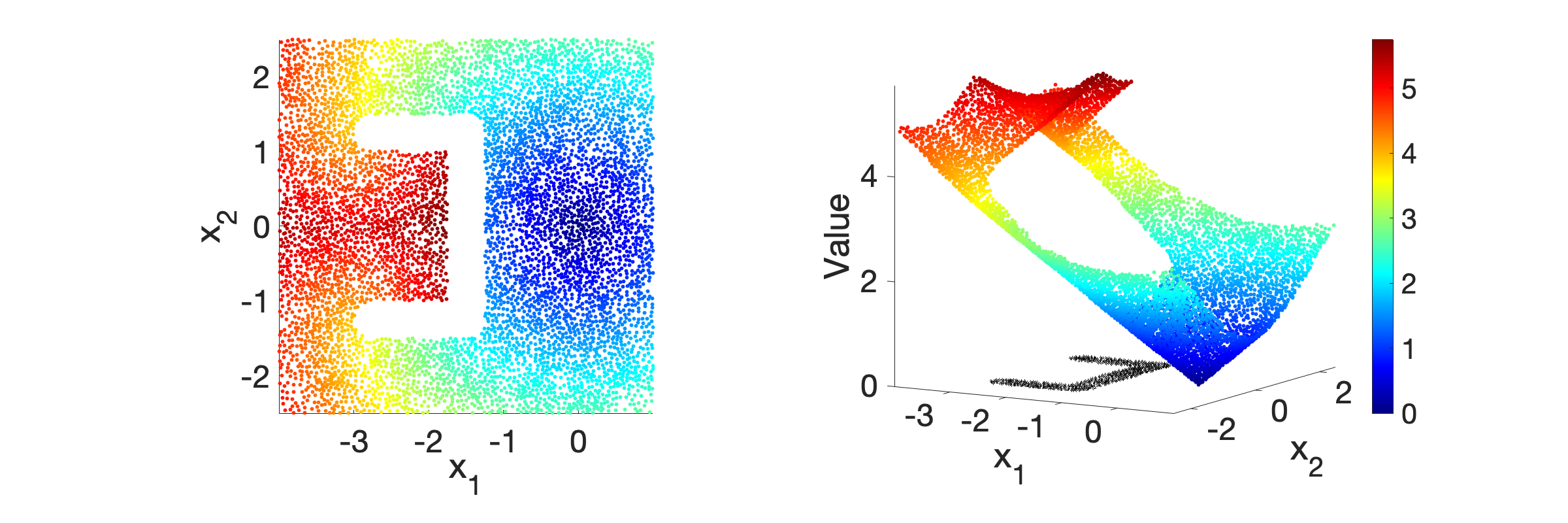}
    \caption{Values in the tree for the constrained single integrator system. The empty region on the left plot shows the added obstacle. The points within the obstacle are shown as a shadow on the right. This data was adapted from the data in Fig. \ref{fig:singleintegratorfreevalue} in 1-2 seconds.}
    \label{fig:singleintegratorobstaclevalue}
\end{figure}

\subsection{Inverted Pendulum and Control Constraints}

The inverted pendulum has two states and one control input, which follow the nonlinear model
\begin{equation} \label{eq:invpen_dynamics}
    \dot\state = F(x,u) = \begin{bmatrix}
    \state_2 \\ \sin(\state_1) + \control
    \end{bmatrix}.
\end{equation}
Stage cost is given by
\begin{equation}
    \runningcost(\state,\control) = x^\top x + u^\top u,
\end{equation}
which encodes an objective of stabilizing at the upright (unstable) equilibrium.
Given large enough control actuation limits, the inverted pendulum is feedback linearizable and hence can be treated similarly to the single-integrator. %
The value function for this setting is shown in \figref{fig:invpenoriginalvalue}. 
Indeed, it appears nearly quadratic as one would expect in an optimal control problem with linear dynamics and quadratic objective. However, when the control input is limited, the behavior is clearly nonlinear. As one expects, more restrictive control limits mean that the pendulum may need to swing back and forth to build momentum or take multiple rotations to bleed momentum. This is demonstrated in \figref{fig:invpenconstrainedvalue}. There are distinct layers and spirals in the value function, where each layer/spiral indicates another oscillation that is required. 

\begin{figure}
    \centering
    \includegraphics[width=0.45\textwidth]{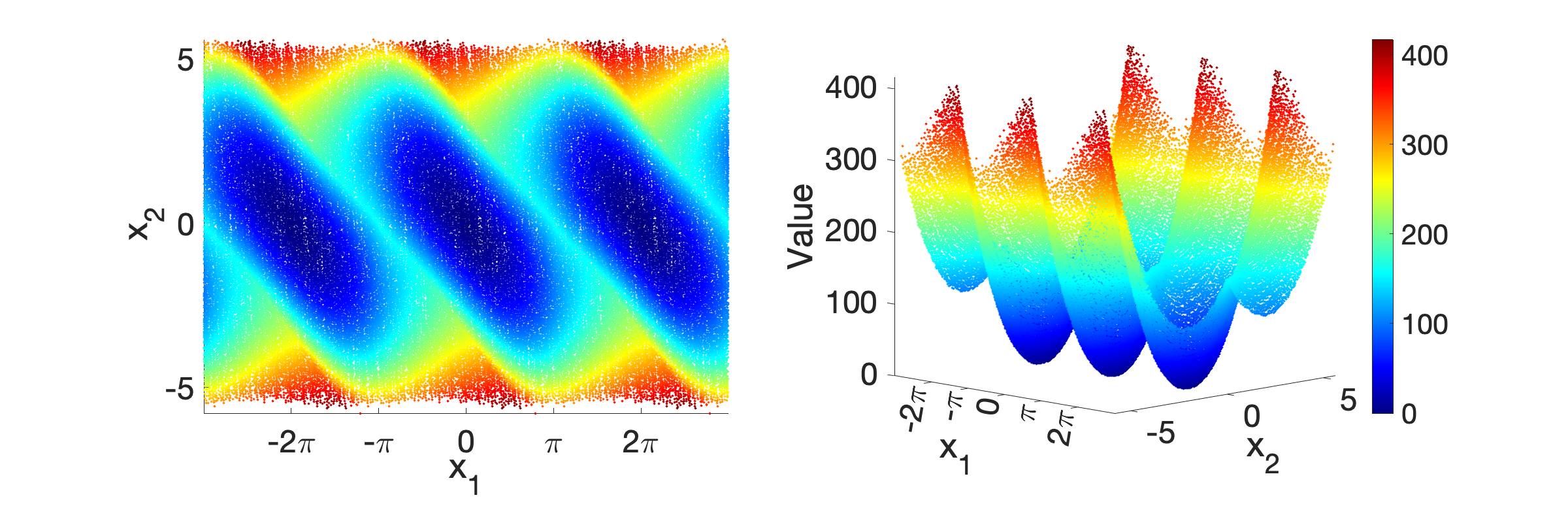}
    \caption{Values in the tree for the inverted pendulum system. The value data is repeated in the $x_1$ direction to demonstrate continuity. This is the result of 30 minutes of computation.}
    \label{fig:invpenoriginalvalue}
\end{figure}

\begin{figure}
    \centering
    \includegraphics[width=0.45\textwidth]{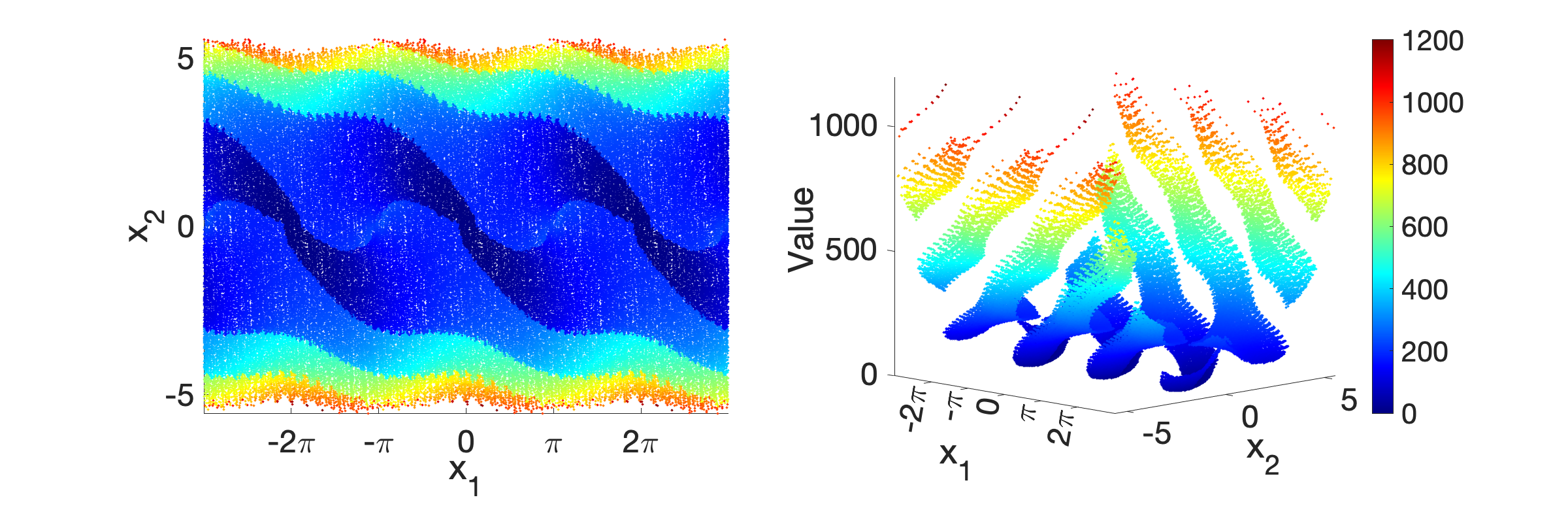}
    \caption{Values in the tree for the inverted pendulum system with more restrictive control limits. The same graph was used. Only the optimal paths through the graph changed. Again, the value data is repeated along the $x_1$ direction to demonstrate continuity. This data was adapted from the data in Fig. \ref{fig:invpenoriginalvalue} in less than 5 seconds.}
    \label{fig:invpenconstrainedvalue}
\end{figure}

\subsection{Controlling Learned Dynamics Models}

\algname~does not rely on having any knowledge of the system's physics. It only needs to be able to call functions corresponding to $\fdyn$ and $\bdyn$. This enables \algname~to control systems whose dynamics are described with a neural network, such as one learned from collected measurements of the true system.

Here, consider a system with
\begin{equation} \label{eq:network_dynamics}
    \dot\state = \cdyn(\state,\control) = \text{Network}([\state;\control]) \approx \begin{bmatrix}
    \state_2 \\ \sin(\state_1) + \control
    \end{bmatrix},
\end{equation}
where the network was trained on the inverted pendulum dynamics. However, there is inevitable mismatch between the learned model and true system, which means that following an open-loop plan is not sufficient.

We performed three simulated experiments for five different algorithms. First, we plan and execute using the physics model from \eqref{eq:invpen_dynamics}. Second, we plan and execute using the network model from \eqref{eq:network_dynamics}. Third, we plan using the network model and execute on the physics model. The five algorithms include \algname, the Open Motion Planning Library (OMPL) implementation of RRT in a single-shot approach, OMPL's RRT in an MPC-style approach, MATLAB's \termfont{fmincon} in a single-shot approach, and MATLAB's \termfont{fmincon} in an MPC approach. 

Several comments are in order. First, the control scheme for \algname~involves derivative-free optimization \eqref{eqn:controller}, in which we sample controls to see which gives the lowest expected stage cost plus cost-to-go. 
Second, the MPC version of RRT involves planning a sequence of states and controls to reach the goal and then executing that plan until the actual state differs from the planned state by more than a small threshold, at which point a new plan is computed. 
We found that this performed better than executing a fixed number of steps before replanning. 
Third, the MPC version of the \termfont{fmincon} solver adds a terminal state cost equal to 10 times the typical state cost. Without this added penalty at the end of the lookahead, the algorithm did not perform well in any case. 
Finally, at no point during the third set of experiments was any algorithm given access to the actual physics model for planning.

Figures \ref{fig:physphys}, \ref{fig:netnet}, and \ref{fig:netphys} show the state trajectories which arose during the experiments. Table \ref{tab:costs} shows the combined cost of the states and controls from each result. 

\begin{figure}
    \centering
    \vspace{0.05in}
    \includegraphics[width=0.49\textwidth]{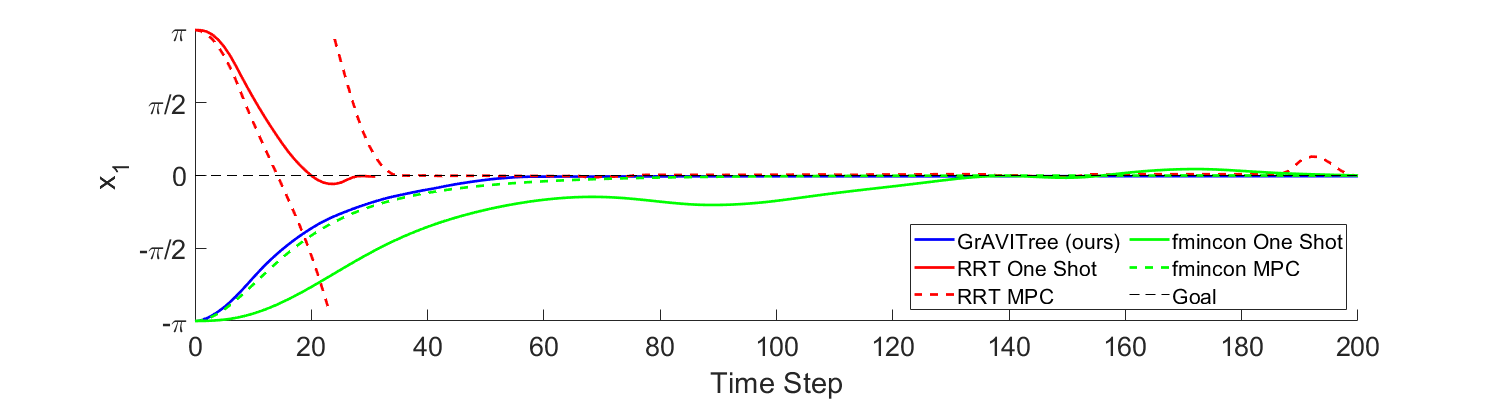}
    \caption{Planned with physics model, executed on physics model. The goal is shown as a thin, dashed black line for reference. \algname~stabilized to an error of less than 1 degree from the goal.}
    \label{fig:physphys}
\end{figure}

In \figref{fig:physphys}, we see the results from planning and executing on the physics model. All algorithms achieve some level of success in reaching the goal. \algname, one-shot RRT, and both \termfont{fmincon} controllers take fairly direct paths to the goal. However, the one-shot RRT and \termfont{fmincon} trajectories have higher costs from faster and slower motion, respectively. The slower one-shot \termfont{fmincon} trajectory may constitute a suboptimal local solution. %
The MPC RRT baseline takes an extra revolution to reach the goal and briefly falls away from the goal near the end but is otherwise successful.

\begin{figure}[tp]
    \centering
    \includegraphics[width=0.49\textwidth]{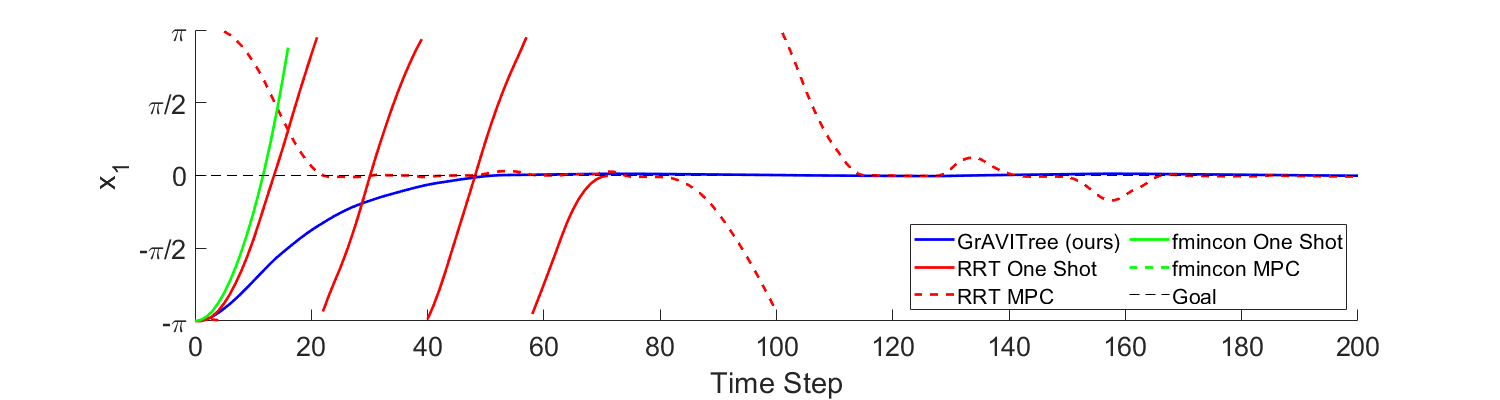}
    \caption{Planned with learned neural network model, executed on neural network model. The goal is shown as a thin, dashed black line for reference. The two \termfont{fmincon} trajectories are identical and overlap on the plot. \algname~stabilized to within approximately 2 degrees of the goal.}
    \label{fig:netnet}
\end{figure}

\figref{fig:netnet} shows the results from planning and executing on the network model. \algname~again takes a direct path to the goal, the one-shot RRT adds multiple oscillations first but still reaches the goal, and the MPC RRT is again able to reach the goal but has trouble stabilizing. The main difference here is in the \termfont{fmincon} trajectories. Although it is possible to compute a gradient of the network dynamics, curvature is often exceedingly high; hence, gradient-based algorithms such as \termfont{fmincon} struggle to cope with the resulting nonconvexities.
Both reach the plan of using maximum control effort at all times, leading state $x_1$ to perpetually increase (i.e. the pendulum spins around with increasing speed). We cut off the plot after one rotation for clarity. 

\begin{figure}
\vspace{0.05in}
    \centering
    \includegraphics[width=0.49\textwidth]{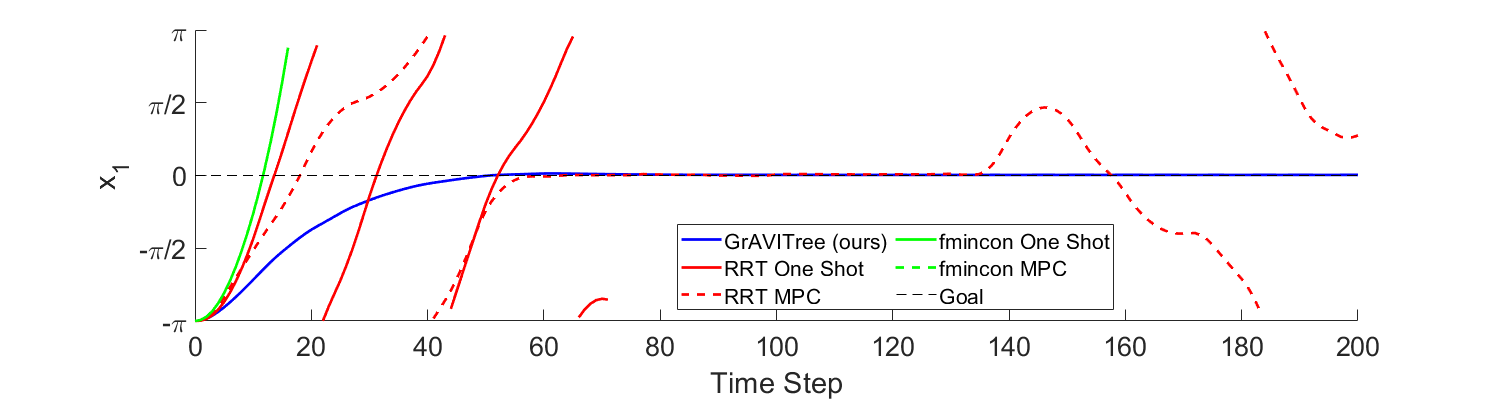}
    \caption{Planned with learned neural network model, executed on physics model. The goal is shown as a thin, dashed black line for reference. The two \termfont{fmincon} trajectories are identical and overlap on the plot. Only GrAVITree successfully stabilizes the pendulum under model mismatch. The MPC version of RRT has some success but has trouble stabilizing. \algname~stabilized to an error of less than 1 degree from the goal.}
    \label{fig:netphys}
\end{figure}

\figref{fig:netphys} shows the results from planning on the learned neural network model and executing on the physics model. This experiment includes the challenges of planning on a network model and dealing with model mismatch. \algname~again takes a direct path to the goal. The one-shot RRT, being an open loop plan, fails completely. The MPC RRT is again able to reach the goal but struggles to stabilize the system. Again, both \termfont{fmincon} approaches rapidly oscillate and the plots are cut off after one full rotation.

\begin{table}[tp]
\vspace{0.125in}
    \centering
    \resizebox{0.48\textwidth}{!}{
    \begin{tabular}{|c|c|c|c|}
    \hline
       Algorithm  & Phys-Phys & Net-Net & Net-Phys \\ \hline
        \algname & 187.15 (0.042$\pm$0.0015) & 189.65 (0.037$\pm$0.0019) & 188.15 (0.037$\pm$0.0016) \\ \hline
        RRT One & 331.28 & 1832.5 & 1652.3 \\ \hline
        RRT MPC & 1088.4 (0.021$\pm$0.033) & 1358.9 (0.031$\pm$0.049) & 1838.7 (0.075$\pm$0.014) \\ \hline
        \termfont{fmincon} One & 323.9 & Diverged & Diverged \\ \hline
        \termfont{fmincon} MPC & 187.22 (0.055$\pm$0.019) & Diverged & Diverged \\ \hline
    \end{tabular} }
    \caption{Trajectory costs. The numbers in parentheses are the mean and standard deviation of computation time for algorithms that involve recomputation. The \algname~times are dominated by the time to evaluate the value function. \algname~was run offline for 30 minutes, once for the physics model and once for the learned network model. Note that \algname~and RRT were run in C++ but \termfont{fmincon} was run in MATLAB, so the times are not directly comparable. All experiments were run on a single core of an i9-9900KS CPU.}
    \label{tab:costs}
\end{table}

\algname~is the only one of these algorithms to successfully reach and stabilize at the goal in all experiments. The MPC \termfont{fmincon} approach is very slightly worse on the first experiment, though this difference can be explained by numerical error in planning as small tweaks to parameters led to much greater differences in trajectory costs. Even when treating the MPC RRT as stabilized when first reaching the goal, the trajectory costs for \algname~were much lower. %

\section{Conclusion}
\label{sec:conclusion}

We have introduced \algname, a sampling-based algorithm for computing a value function for dynamic systems. The algorithm and its reconfiguration ability was demonstrated on two test systems: a single-integrator and an inverted pendulum. Moreover, our algorithm was used to control a system whose dynamics were modeled by a neural network and successfully handled the resulting model mismatch while remaining robust to gradient irregularities and high curvature in the neural network. Future work will aim to scale \algname~to systems with larger state spaces, more complex dynamics, and planning problems in (learned) latent spaces.

\balance
\printbibliography

\end{document}